\documentclass[11pt]{article}
\RequirePackage[T1]{fontenc}
\RequirePackage[utf8]{inputenc}
\RequirePackage{amsmath,amssymb,amsthm}
\RequirePackage{graphicx}
\RequirePackage{textcomp}
\RequirePackage{etoolbox}
\RequirePackage{fullpage}
\RequirePackage{xcolor}
\RequirePackage{listings}
\RequirePackage{newfloat}
\DeclareFloatingEnvironment[
  fileext=lol,
  listname={List of Listings},
  name=Listing,
  placement=htbp
]{codelisting}
\RequirePackage[toc,page]{appendix}
\RequirePackage{centernot}

\newtheorem{proposition}{Proposition}

\RequirePackage{subcaption}
\RequirePackage{wasysym}
\RequirePackage{authblk}
\RequirePackage{bm}
\RequirePackage{bbm}
\RequirePackage{hyperref}
\RequirePackage{cleveref}
\RequirePackage{tikz}
\usetikzlibrary{backgrounds,positioning,arrows.meta,calc,decorations.pathmorphing,shapes.geometric,fit}
\RequirePackage[round]{natbib}
\RequirePackage{nicefrac}
\RequirePackage{enumitem}
\RequirePackage{booktabs}
\RequirePackage{float}

\tikzset{
  var/.style={draw, rounded corners, inner sep=4pt, align=center},
  exo/.style={var, fill=gray!12},
  endo/.style={var, fill=blue!6},
  det/.style={var, fill=green!6},
  arr/.style={-{Stealth[length=2.2mm]}, line width=0.8pt},
  lab/.style={font=\small}
}

\newtheorem{corollary}{Corollary}[section]
\theoremstyle{definition}
\newtheorem{definition}{Definition}[section]

\DeclareMathOperator{\E}{\mathbb{E}}

\DeclareMathOperator{\Var}{Var}
\DeclareMathOperator{\Cov}{Cov}

\title{Realizing Common Random Numbers: Event-Keyed Hashing for Causally Valid Stochastic Models}

\author[1,$\ast$]{Vince Buffalo}
\author[2]{Carl A. B. Pearson}
\author[1]{Daniel Klein}
\affil[1]{Institute for Disease Modeling\\Gates Foundation\\Seattle, WA}
\affil[2]{Department of Epidemiology\\University of North Carolina\\Chapel Hill, NC}
\affil[$\ast$]{Corresponding author: \href{mailto:vince.buffalo@gatesfoundation.org}{vince.buffalo@gatesfoundation.org}}

\begin{document}
\maketitle

\begin{abstract}

Agent-based models (ABMs) are widely used to estimate causal treatment effects
via paired counterfactual simulation. A standard variance reduction technique
is \emph{common random numbers} (CRNs), which couples replicates across
intervention scenarios by sharing the same random inputs. In practice, CRNs are
implemented by reusing the same base seed, but this relies on a critical
assumption: that the same draw index corresponds to the same modeled event
across scenarios. Stateful pseudorandom number generators (PRNGs) violate this
assumption whenever interventions alter the simulation's execution path,
because any change in control flow shifts the draw index used for all
downstream events. We argue that this \emph{execution-path-dependent} draw
indexing is not only a variance-reduction nuisance, but represents a
fundamental mismatch between the scientific causal structure ABMs are intended
to encode and the program-level causal structure induced by stateful PRNG
implementations. Formalizing this through the lens of structural causal models
(SCMs), we show that standard PRNG practices yield causally incoherent paired
counterfactual comparisons even when the mechanistic specification is otherwise
sound. We show that a remedy is to combine counter-based random number
generators (e.g., Philox/Threefry) with event identifiers. This decouples
random number generation from simulation execution order by making random draws
explicit functions of the particular modeled event that called them, restoring
the stable event-indexed exogenous structure assumed by SCMs.

\end{abstract}

\section{Introduction}
\label{sec:introduction}

Researchers increasingly use stochastic models to address \emph{counterfactual
questions} in epidemiology \citep{halloran2010design}, economics
\citep{tesfatsion2006handbook}, and policy analysis
\citep{epstein2006generative}. Such counterfactual questions ask ``what is the
outcome if we were to do intervention X'' and ``what is the treatment effect
between baseline and intervention?''. However, the \emph{fundamental problem of
causal inference} \citep{holland1986statistics} complicates our ability to
answer those questions: for each individual, we can only observe the outcome
under the treatment they actually received, never both potential outcomes. In
observational data, estimating causal treatment effects requires assumptions
about the data-generating process that are often unverifiable, such as no
unmeasured confounding (conditional exchangeability) across treatment groups
\citep{Hernan2024-jb,Greenland1986-pi}. Randomized trials estimate average
treatment effects (ATEs) by breaking confounding through random assignment of
individuals to treatment groups \citep{rubin1974estimating}, but
individual-level treatment effects remain unidentifiable since only one
potential outcome occurs. Moreover, randomization is often limited by ethical
or practical constraints.

By contrast, high-detail stochastic models such as agent-based models (ABMs)
simulate a fully specified data-generating process \emph{in silico}, allowing
modelers to \emph{intervene} on the simulated system and compare outcomes under
different interventions. For such ABMs, different samples yield different ATE
estimates, and researchers may additionally incorporate parameter uncertainty
via random sampling. Thus, analyses entail generating multiple simulation
replicates, much like empirical studies require multiple independent samples.
As in empirical experiments, the number of samples, choice of target
observable, and instrument noise together determine our ability to clearly
estimate effects.

A standard variance reduction technique is to couple stochastic replicates
\emph{across} intervention scenarios using \emph{common random numbers} (CRNs)
\citep{Stout2008-vn,klein2024noise}. Formally, CRNs construct a
\emph{probabilistic coupling} between baseline and intervention outcomes by
sharing the same random inputs (i.e., exogenous noises) across scenarios. Let
$Y^{(0)}$ and $Y^{(1)}$ denote an outcome under baseline and intervention, and
let $\Delta := \mathbb{E}[Y^{(1)}-Y^{(0)}]$ be the (model-implied) average
treatment effect. Given $M$ paired simulation replicates
$\{(Y^{(0)}_m,Y^{(1)}_m)\}_{m=1}^M$, the paired estimator
$\widehat{\Delta}=\frac{1}{M}\sum_{m=1}^M\!\big(Y^{(1)}_m-Y^{(0)}_m\big)$ has

\begin{equation}
\Var(\widehat{\Delta}) =\frac{1}{M}\Big(\Var(Y^{(1)})+\Var(Y^{(0)})-2\Cov(Y^{(1)},Y^{(0)})\Big),
\label{eq:ate-variance}
\end{equation}
so positive covariance induced by shared random inputs reduces Monte Carlo
variance relative to independent simulation runs (for which the covariance is
zero).

The stochastic simulation routines that underpin ABMs often use a
\emph{pseudorandom number generator} (PRNG), which is a deterministic algorithm
that produces a stream of numbers that approximate the properties of true
random numbers. Deterministic generators are critical to CRN approaches, as
well as for practical research objectives, such as analysis reproducibility.
Because these methods are deterministic, the PRNG can be thought of as a
preset sequence of values. Each draw moves to the next portion of this
sequence by advancing the generator's internal state.

In practice, a single centralized random number generator is often used for all
stochastic draws within a simulation replicate. That practice, combined with
the nature of PRNGs, leads to \emph{execution-path dependency} within a
simulation. That dependency can thwart attempts to use CRNs for variance
reduction.

Typically, modelers implement CRNs by reusing the same base seed (which sets
the PRNG state and initial position within the sequence) across intervention
scenario simulations. This approach guarantees an \emph{identical sequence} of
random draws $u_1, u_2, \ldots$ across all simulations seeded with the fixed
base seed. However, this approach relies on a critical assumption: the same
position in the random PRNG draw sequence corresponds to the same \emph{modeled
event} (i.e. a particular random variable) across coupled scenarios. Because
stateful PRNGs advance their internal state each function call (i.e., each
random draw), any simulation code that alters control flow changes the number
of prior draws consumed, and thus the draw index used for downstream events.
This \emph{execution-path-dependent} draw indexing decouples the modeled events
from the random draws they receive, violating the coupling assumption required
for CRNs.

For example, suppose a researcher wants to evaluate a vaccine intervention by
comparison to a baseline without vaccination. In the intervention scenario,
vaccine recipients stochastically avoid infection by drawing against their
efficacy. If that efficacy were 0 (i.e.\ a placebo), individuals do not
actually avoid infection. Yet the intervention scenario now includes an
additional draw for each vaccine recipient's efficacy check, shifting all
subsequent draws from the central PRNG. Downstream events---the newly infected
individual's incubation period, recovery time, contact events while infectious,
potential secondary transmissions---now receive different random numbers than
in the baseline. The non-placebo intervention of course has even more radical
shifts in execution path, since some infections (and their associated outcome
draws) are prevented entirely. In short, the requirement for actual CRNs is no
longer satisfied: the same modeled event (e.g. ``agent $i$ contacts agent $j$
at time $t$'') receives different random inputs depending on execution history,
even irrelevant history, rather than event identity alone.

The importance of matching random draws to the same modeled stochastic events
across intervention scenarios has been previously noted in individual-level
disease microsimulation \citep{Stout2008-vn}, in the context of ``perfect
counterfactuals'' for epidemic simulations \citep{Kaminsky2019-pc}, and
later in epidemiological ABM simulations \citep{klein2024noise}. A classic,
practical mitigation approach has been to use separate random number streams
for different event classes (e.g., transmission, demography)
\citep{Stout2008-vn,lecuyer2002object}. By partitioning events into independent
streams, changes in the number or timing of events in one class (e.g.,
preventing transmissions) do not immediately perturb the random draws consumed
by events in other classes (e.g., demographic processes). However, these
stream-based approaches are coarse: within a class, draw-index dependence
persists. Determining the appropriate granularity of stream partitioning
requires anticipating all possible execution path changes, making this
approach error-prone and difficult to validate for complex ABMs with
interdependent event types.

In this paper, we argue that execution-path-dependent draw indexing (inherent
in stateful PRNG usage) is not only a variance-reduction nuisance, but
represents a fundamental mismatch between the \emph{scientific} causal
structure ABMs are intended to encode, and the \emph{program-level} causal
structure induced by stateful PRNG implementations. For an ABM to function as a
valid structural causal model \citep{pearl2009causality} under interventions,
the identity of exogenous noise terms must be stable; only the structural
equations (or their inputs) should change. We formalize this problem through
the lens of structural causal models and show that standard PRNG practices can
yield causally incoherent paired counterfactual comparisons even when the
mechanistic specification is otherwise sound.

It is important to note that we are not claiming ABMs using stateful PRNGs are
invalid as \emph{probabilistic models} within a fixed intervention scenario.
For any fixed scenario $a$, the simulator induces a well-defined distribution
$P_a$ over outcomes across random seeds. Moreover, Monte Carlo replication
across independent seeds consistently estimates expectations under $P_a$.
Rather, the failure is in the \emph{across-scenario coupling}: when
interventions alter execution paths, even the same seed no longer produces
aligned noise terms across scenarios (because events are assigned noise by draw
index rather than event identity), making individual counterfactuals
ill-defined.

We show that a remedy to these problems created by stateful PRNGs is to combine
counter-based random number generators (e.g. Philox/Threefry,
\citealt{Salmon2011-zz}) with event identifiers, an approach we refer to as
\emph{event-keyed random number generation}. This approach decouples random
number generation from the simulation execution order by making random draws
explicit functions of the particular modeled event that called them. This
restores stable event-indexed exogenous structure assumed by structural causal
models, allowing ABMs to accurately represent the causal fidelity of the
encoded scientific model, maximize the variance-reduction benefits of CRN
approaches, and yield coherent counterfactual comparisons aligned with
SCM-style interventions.

\section{ABMs as Structural Causal Models}
\label{sec:abm-scm}

\subsection{Structural Causal Models Primer}

Structural Causal Models were introduced by
\citep{Pearl1995-ap,pearl2009causality,Greenland1999-yg} as a framework for
extending structural equation models (SEMs, \citealt{Wright1921-fp}), which
assumed linear relationships between variables, to arbitrary functional forms
with exogenous noise terms. An SCM is defined as a tuple $\mathcal{M} =
(\mathbf{U}, \mathbf{V}, \mathbf{F})$, where $\mathbf{U}$ is a set of exogenous
(i.e., external to the model) variables, $\mathbf{V}$ is a set of endogenous
(i.e., internal to the model) variables, and $\mathbf{F}$ is a set of functions
that determine the value of each endogenous variable as a function of its
parents in $\mathbf{V}$ and exogenous variables in $\mathbf{U}$. Often the
exogenous variables in $\mathbf{U}$ are random variables with specified
distributions $P(\mathbf{U})$, making the SCM probabilistic. When $\mathbf{U}$
contains specified values rather than random variables, these are typically treated
as model parameters. In Pearl's framework, the structural equations are
deterministic functions, so all randomness in the model is induced by the input
of exogenous random noise through the random variables.

SCMs are typically constructed so that exogenous variables in $\mathbf{U}$ are
mutually independent; dependencies between endogenous variables are explicitly
encoded in $\mathbf{F}$ rather than in $P(\mathbf{U})$ \citep[p. 44,
61-63][these are sometimes called \emph{Markovian} SCMs]{pearl2009causality}.
This simplifies reasoning through the directed acyclic graph (DAG) structure
induced by $\mathbf{F}$. In theory, $\mathbf{U}$ can be thought of as
independent uniform random draws that are then transformed by deterministic
functions to produce event outcomes; this coincides with the way
random draws are often made from PRNGs in stochastic algorithms (we note that
some distribution sampling algorithms, e.g.\ rejection sampling, consume a
variable number of underlying uniforms depending on distribution parameters,
which can create additional noise misalignment across scenarios unless the
underlying uniforms are themselves event-keyed).

Another key feature of SCMs is that they support \emph{interventions} through
the $do$-operator \citep{pearl2009causality}. An intervention $do(X=x)$ on
variable $X \in \mathbf{V}$ is modeled by replacing the structural equation for
$X$ in $\mathbf{F}$ with $X = x$, while holding all other exogenous variables
$\mathbf{U}$ and all other structural equations fixed. This operation creates a
new model variant $\mathcal{M}_{do(X=x)}$ for this intervention, which
represents the counterfactual ``what if we were to set $X$ to $x$?'' holding
all else equal. A key benefit of Pearl's SCM framework is that it provides a
formal language for reasoning about the effects of interventions through the
causal structure encoded in the DAG induced by $\mathbf{F}$.

Critically, because the exogenous variables $\mathbf{U}$ remain unchanged for
$do(X=x)$, interventions enable meaningful counterfactual comparisons: the same
``underlying randomness'' is used across different intervention scenarios,
allowing us to observe both potential outcomes for the same random realization
(i.e. same realizations of the random exogenous variables). This is precisely
what common random numbers in ABM simulations aim to achieve, but in practice
often fail to do so due to execution-path-dependent draw indexing when using
stateful PRNGs.

\subsection{ABMs as Scientific SCMs}
\label{sec:abm-as-scm}

ABMs can be interpreted as computational, dynamic implementations of
\emph{scientific} structural causal models. The mechanistic rules encoded in an
ABM correspond to the structural equations $\mathbf{F}$, the
internal state variables of agents and the environment correspond to endogenous
variables $\mathbf{V}$, and stochasticity enters through exogenous noise
variables $\mathbf{U}$. Outcomes of interest (e.g., incidence or
hospitalization time series) are components of $\mathbf{V}$ (or statistics
derived from them) that correspond to observables in the modeled system. When
we talk about the DAG implied by a particular ABM, we usually mean the
unrolled, time-expanded DAG that includes all time-indexed variables and
events.

While ABM algorithms encode a data-generating process, using them as causal
models that are capable of answering counterfactual questions requires
explicitly defining how exogenous noise terms $\mathbf{U}$ enter the model and
correspond to \emph{specific, unique} modeled stochastic events. This matters
because SCM-style interventions that compare baseline and intervention outcomes
must follow the semantics of the $do$-operator: when intervening on a variable,
we hold all exogenous noise terms fixed and only modify the structural
equations (or their inputs). Thus, for an ABM to function as a valid SCM under
interventions, the identity of the exogenous noise terms must remain stable
across scenarios.

For example, in a transmission model where infection status $I_i$ depends on
noise $U_i$ (e.g. a random draw from a Bernoulli distribution with some
infection probability $p_I$), a vaccine intervention should modify the
structural equation (reducing susceptibility via $p_I(A) = p_I (1-A\cdot VE)$
where $VE$ is the vaccine efficacy and $A \in \{0, 1\}$ is the vaccine
intervention indicator) but not change which noise term is used for agent $i$'s
infection. In \Cref{sec:stateful-prngs}, we show that standard, stateful PRNG implementations
violate this critical requirement. First, we define the conditions under which
an ABM algorithm is consistent with its intended scientific SCM.

In Pearl's framework, all counterfactuals live on a shared probability space
(defined by the probability measure $P(\mathbf{U})$ component of the SCM).
Thus, a \emph{world} is a realization of the entire exogenous collection
$\mathbf{U}$, while a \emph{scenario} (which we also refer to as a
\emph{counterfactual world}, e.g. baseline vs intervention) changes the
deterministic components of the SCM: the structural equations $\mathbf{F}$ or
their inputs, not $\mathbf{U}$ itself. For an ABM to respect this shared-world
semantics, it must ensure that for each intervention scenario $a$, the outcomes
$Y^{(a)}$ that the simulator outputs can be written as
\begin{equation} Y^{(a)} = \Phi_a(\mathbf{U}) \label{eq:shared-world}
\end{equation}
for the stochastic simulation routine $\Phi_a$ (composed of entirely
deterministic functions) with the \emph{same} $\mathbf{U}$ across all scenarios
$a$. The base seed $s$ then serves as an index to a particular world
$\mathbf{U} = G(s)$ for some deterministic function $G$. 

To make this concrete, consider a stochastic model of a singular transmission
event $e$ (i.e. ``does agent $i$ infect agent $j$ at time $t$?''). This event
depends on endogenous modeled state and inputs $S_e$ used to determine the risk
of this particular transmission event (e.g., who is in contact with whom,
contact duration, susceptibility and infectiousness modifiers, and any
intervention variables like vaccines). An ABM might implement this transmission rule as
\begin{equation}
    Y_e \;=\; \mathbf{1}\!\left[U_e < p_e(S_e)\right], \qquad U_e \sim \mathrm{Unif}(0,1),
\label{eq:event-shock}
\end{equation}
where $p_e(S_e)$ is fully determined by modeled mechanisms and modeled state,
and $U_e$ is the exogenous component that captures whatever the model leaves
outside $S_e$. This is precisely the potential outcomes framework
\citep{rubin1974estimating,rubin2005causal}: under an intervention scenario $a$
we would write $Y^{(a)}_e = \mathbf{1}\!\left[U_e <
p_e\big(S^{(a)}_e\big)\right]$, reflecting that the intervention acts on the
endogenous state or structural equations $S_e$ only, while the exogenous noise
$U_e$ is \emph{held fixed}. Then, the individual treatment effect 

\begin{align}
    \mathrm{ITE}_e &= Y^{(1)}_e - Y^{(0)}_e  \\
                &= \mathbf{1}\!\left[U_e < p_e\big(S^{(1)}_e\big)\right] - \mathbf{1}\!\left[U_e < p_e\big(S^{(0)}_e\big)\right]
\end{align}
is only a coherent counterfactual quantity because \emph{the same} $U_e$ is
used for this event for both potential outcomes $Y^{(0)}_e$ and $Y^{(1)}_e$.

Consequently, realizations of the exogenous noise $U_e \sim \mathrm{Unif}(0,1)$
partition the event outcomes into \emph{principal strata}
\citep{Frangakis2002-ie} for this event (akin to the unit- or individual-level
principal strata more common in randomized trials contexts). We show the
principal strata for event $e$ as in Table~\ref{tab:principal-strata}, where
$p_e^{(0)}$ and $p_e^{(1)}$ are the infection probabilities under baseline and
intervention scenarios respectively, with $p_e^{(1)} < p_e^{(0)}$ reflecting a
protective treatment. Extending this framing to multiple intervention
scenarios, with varying vaccine efficacy, the principal strata are obviously
ordered consistently with the vaccine efficacy. Practically speaking for this
example event, the identical $U_e$ guarantees that increasing vaccine efficacy
can only leave an outcome the same or convert to protection.

\begin{table}[h]
\centering
\begin{tabular}{llccl}
\toprule
Stratum & Condition & $Y_e^{(0)}$ & $Y_e^{(1)}$ & Interpretation \\
\midrule
Always-infected & $U_e < p_e^{(1)}$ & 1 & 1 & Infected regardless of intervention \\
Preventable & $U_e \in [p_e^{(1)}, p_e^{(0)})$ & 1 & 0 & Intervention prevents infection \\
Never-infected & $U_e \geq p_e^{(0)}$ & 0 & 0 & Not infected regardless \\
\bottomrule
\end{tabular}

\caption{Principal strata for a transmission event $e$ with baseline risk
$p_e^{(0)}$ and intervention risk $p_e^{(1)} < p_e^{(0)}$. The latent $U_e$
determines which stratum the event belongs to. With execution invariance, the
same $U_e$ is realized under both scenarios, making stratum membership
identifiable in simulation.}

\label{tab:principal-strata}
\end{table}

In observational studies, principal stratum membership is fundamentally
unidentifiable; only one potential outcome is observable per unit, so one
cannot determine whether a particular infection was ``preventable'' or
``always-infected.'' In theory, ABMs (and stochastic computational models more
generally) allow researchers to circumvent this limitation to enable event- or
unit-level causal reasoning; we can in principle use the same $U_e$ across
intervention scenarios to identify stratum membership for each event/unit.
Practically, this capability depends entirely on whether an event/unit receives
the same exogenous noise across scenarios. As we have illustrated, naive use of
stateful PRNG approaches easily leads to $U_e^{(0)} \neq U_e^{(1)}$ even when
using the same base seed, invalidating proper counterfactual comparisons.

\subsection{Execution Invariance for SCM-Consistent ABM Counterfactuals}
\label{sec:execution-invariance}

If researchers using ABMs wish to obtain causally valid answers to
counterfactual questions through simulation interventions, those ABMs must satisfy
structural requirements beyond simply encoding the mechanistic data-generating
process. In particular, the ``all-else equal'', shared-world semantics
described above require that the mapping between modeled event identity and
exogenous noise be \emph{invariant} to changes in execution history induced by
interventions. We call this property \emph{execution invariance}.

\begin{definition}[Execution Invariance] Given a base seed $s$ and considering
    an intervention variable $A \in \{0, 1\}$ (e.g., vaccination) that modifies
    the simulator's update rules (equivalently, structural equations
    $\mathbf{F}$), an ABM implementation is \emph{execution-invariant} with
    respect to a set of interventions if: for any two scenarios $a, a'$ and any
    stochastic event $e$ that occurs in both scenarios, the simulator uses the
    same exogenous noise value for $e$ in both runs when initialized with seed
    $s$. Formally, there exists a deterministic function $g$ such that:

$$U_e = g(s, \text{event\_id}_e)$$
where $\text{event\_id}_e$ is a canonical identifier for event $e$ that is 
stable across scenarios, not dependent on execution history.

\end{definition}

Circling back to the example of introducing a placebo vaccine: if a
``placebo'' scenario is constructed to be mechanistically identical to baseline
(no changes to state update rules or hazards, only additional draws), then any
divergence in outcomes indicates an execution-invariance violation, though
identical outcomes would not necessarily confirm satisfaction.

As noted for the non-placebo extension to that example, random events
may be \emph{conditionally occurring}; this does not violate the semantics of
the SCM framework, since $\mathbf{U}$ is the collection of noise terms for
\emph{all possible} events across all scenarios, and scenario-specific events
simply have no matched counterpart to compare against. For example, if a
vaccine saves an individual's life, that individual may go on to have offspring
whose lifetimes then require exogenous noise draws conditional on the event
that their parent survived; these events have no counterpart in the scenario
where the parent died. We thus distinguish \emph{conditionally occurring}
events from \emph{noise contamination} for an event. The problem arises when
events that \emph{do} occur in both scenarios receive different noise due to
draw-index coupling (i.e. random draw consumption differences): here the
counterfactual comparison becomes incoherent because we are comparing the same
modeled event under different exogenous realizations.

Execution invariance embodies the \emph{modularity} principle in causal
modeling \citep{pearl2009causality,Hausman1999-xo}: structural mechanisms
should be independent components that can be modified under interventions
without affecting others. We apply this principle specifically to exogenous
noise terms: each event's noise $U_e$ should be an autonomous component whose
identity does not depend on whether other events occur. In standard (Markovian)
SCMs, exogenous variables are often independent by construction.

% This shared-world property implies what we might call a \emph{commutation} 
% requirement: sampling a world (via seed $s$) and then intervening should be 
% equivalent to intervening and then sampling, because interventions change only 
% mechanisms, not which exogenous variables exist or which events they correspond 
% to. For seed-matched paired runs, this means
% %
% \begin{equation}
% \big(Y^{(0)}(s), Y^{(1)}(s)\big) = \big(\Phi_0(G(s)), \Phi_1(G(s))\big)
% \label{eq:paired-runs}
% \end{equation}
% %
% rather than having separate "exogenous tapes" $G_0(s)$ and $G_1(s)$ for each 
% scenario. Without execution invariance, interventions effectively re-index 
% which exogenous variables feed which events, breaking this shared-world 
% structure and rendering individual counterfactuals ill-defined—even when 
% population-level estimates remain unbiased.

Importantly, we note that execution invariance is stronger than within-scenario
distributional correctness. Most forward simulations built with stateful PRNG
produce correct marginal and joint distributions over outcomes within a fixed
scenario. Execution invariance additionally requires that seed-matched paired
runs correspond to an SCM-style counterfactual coupling in which interventions
change only structural equations (or their inputs), not the identity of
event-indexed noise terms. As we show in \Cref{sec:stateful-prngs}, standard
stateful PRNG implementations violate this requirement by making noise identity
depend on endogenous execution history via the draw index.

\section{Stateful PRNGs and Execution-Path-Dependent Draw Indexing}
\label{sec:stateful-prngs}

A \emph{pseudorandom number generator} (PRNG) is a deterministic algorithm that
produces a sequence of reproducible, approximately random numbers from an
initial \emph{seed}. \emph{Stateful} PRNG algorithms (e.g. Mersenne Twister,
PCG, xoshiro) can be thought of as a function $(u_k, s_{k+1}) = F(s_k)$ that
are seeded with some initial state $s_0$, and for draw $k$, return a random
uniform $u_k$ and new state $s_{k+1}$ that is then set as the internal state
for the next draw.

\subsection{Stateful PRNGs as Execution-Path-Indexed Noise}

Because stateful PRNGs advance mutable internal state on every call, they
exhibit execution-path-dependent draw indexing: the particular
random draw (i.e. outcome of a random event) is not determined solely by that
event's \emph{model inputs} (i.e. parameters or other random variables), but
also by how many PRNG draws have been consumed earlier in the execution (i.e.,
by the PRNG's current internal state). As a result, changes in control flow
(e.g.\ branching or early exits) that consume a different number of draws can
change downstream outcomes even when the downstream event's inputs are
unchanged. Overall this means that the identity of the exogenous noise term
used for a particular modeled event is not indexed by the event itself, but
rather by the number of prior draws consumed (the \emph{draw index}).
Critically, any time the program control flow (i.e. execution path) changes the
number of draws consumed before an event across intervention scenarios, the
identity of the exogenous noise terms used for modeled events also changes,
violating execution invariance. We illustrate this violation with a concrete
example.

\subsection{Example: Infection Model with Incubation Time Draw}

To illustrate this problem concretely, consider a toy infection model where
each of $N$ individuals has some base probability of infection
\texttt{p\_infect[i]} on day 0, and if infected, an incubation time is drawn
from some distribution via a function \texttt{draw\_incubation(rng)}. We
consider two scenarios: a baseline without vaccination (\texttt{vaccinated =
False}), and an intervention scenario where only person 1 is vaccinated
(\texttt{vaccinated = True}), reducing their probability of infection by some
vaccine efficacy factor $VE$. 

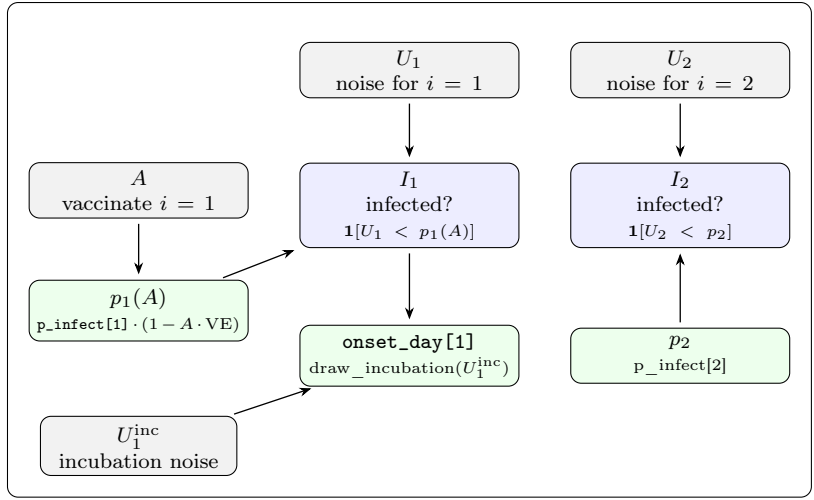
\begin{figure}[!htbp]
\centering
\begin{tikzpicture}[
  font=\scriptsize,
  var/.style={draw, rounded corners, inner sep=3pt, align=center, text width=2.7cm},
  exo/.style={var, fill=gray!10},
  endo/.style={var, fill=blue!7},
  det/.style={var, fill=green!7},
  arr/.style={-{Stealth[length=1.6mm]}, line width=0.5pt, shorten >=2pt, shorten <=2pt},
]

% Title anchor so fit includes title space
\node[draw=none, inner sep=0pt, outer sep=0pt] (titleanchor) at (0,3.2) {};

% Exogenous noises (event-indexed; fixed identity across counterfactuals)
\node[exo] (U1) at (3.6,2.6) {$U_1$\\noise for $i=1$};
\node[exo] (U2) at (7.2,2.6) {$U_2$\\noise for $i=2$};

% Intervention and risk for person 1
\node[exo] (A)  at (0,1.0) {$A$\\vaccinate $i=1$};
\node[det] (p1) at (0,-0.6) {$p_1(A)$\\{\tiny $\texttt{p\_infect[1]}\cdot(1-A\cdot\mathrm{VE})$}};
\draw[arr] (A) -- (p1);

% Person 1 infection uses U1 (not a shared stream)
\node[endo] (I1) at (3.6,0.8)
  {$I_1$\\infected?\\{\tiny $\mathbf{1}[U_1<p_1(A)]$}};
\draw[arr] (p1) -- (I1);
\draw[arr] (U1) -- (I1);

% Scheduling variable (onset day) drawn only if infected; does not affect other hazards here
\node[det] (onset) at (3.6,-1.2)
  {\texttt{onset\_day[1]}\\{\tiny draw\_incubation$(U^{\text{inc}}_1)$}};

% Optional separate noise for incubation (keeps semantics clean)
\node[exo, text width=2.4cm] (Uinc1) at (0.0,-2.4)
  {$U^{\text{inc}}_1$\\incubation noise};

\draw[arr] (I1) -- (onset);
\draw[arr] (Uinc1) -- (onset);

% Person 2 risk and infection uses its own noise U2
\node[det] (p2) at (7.2,-1.2) {$p_2$\\{\tiny p\_infect[2]}};
\node[endo] (I2) at (7.2,0.8)
  {$I_2$\\infected?\\{\tiny $\mathbf{1}[U_2<p_2]$}};

\draw[arr] (p2) -- (I2);
\draw[arr] (U2) -- (I2);

% (No arrows from onset_day[1] to I2 in this simplified scientific SCM.)

% Panel box + title
\node[
  draw, rounded corners, inner sep=8pt,
  fit=(titleanchor)(A)(p1)(U1)(I1)(Uinc1)(onset)(U2)(I2)(p2),
  label={[anchor=south west, font=\small]north west:{\textbf{Scientific SCM:} event-indexed noise (stable across counterfactuals)}}
] (box) {};

\end{tikzpicture}
\caption{Intended scientific causal structure. Each infection event has
event-indexed exogenous noise ($U_1$, $U_2$) with stable identity across
counterfactuals. The vaccination intervention $A$ affects only person 1's risk
$p_1$. Downstream scheduling (\texttt{onset\_day}) does not create causal paths
to unrelated infections. \textbf{Colors:} \colorbox{gray!10}{gray} = exogenous
inputs/interventions; \colorbox{green!7}{green} =
deterministic/bookkeeping/scheduling; \colorbox{blue!7}{blue} = model outcomes
(infections).}
\label{fig:scientific-scm}
\end{figure}

Since this models only events within a single timestep, infections are causally
independent events across individuals. The intended scientific SCM in this case
has causally independent infection events, as shown in
Figure~\ref{fig:scientific-scm} by the lack of a causal path between $I_1$ and
$I_2$. Consequently, the exogenous noise terms $U_1$ and $U_2$ represent
autonomous random draws for individuals 1 and 2, and are thus invariant across
scenarios in which the base seed fixes the random stream. Under the intended
SCM, person 2's infection depends only on their own risk $p_2$ and noise $U_2$
--- there is no causal pathway from person 1's infection status.

A typical implementation of this model using stateful PRNGs appears in
Listing~\ref{lst:stateful-prng}. The critical problem emerges in the
conditional \texttt{if} block that draws incubation time \emph{only} when
infection occurs: this draw advances the PRNG state even though it serves only
to schedule a future event and has no mechanistic effect on other individuals'
infection risks.

Consider execution under both scenarios with a fixed seed $s$ that generates
the random draw sequence $R_1, R_2, R_3, \ldots$. Suppose that under the
baseline scenario with seed $s$, person 1 becomes infected. In that case, the
PRNG calls proceed as: $R_1 \to$ person 1's infection draw, $R_2 \to$ person
1's incubation time, and $R_3 \to$ person 2's infection test. But now suppose
that under the vaccination intervention, the same initial draw $R_1$ that
infected person 1 in the baseline scenario \emph{prevents} person 1's infection
under the intervention. Then, the \texttt{draw\_incubation()} call never
executes, so the PRNG call sequence becomes: $R_1 \to$ person 1's infection
draw, and $R_2 \to$ person 2's infection draw. Person 2's infection now uses
\emph{different} exogenous noise ($R_2$ versus $R_3$) across scenarios, purely
due to draw-index shifts rather than any change in their infection mechanism or
risk.

\begin{codelisting}[!htbp]
\centering
\begin{lstlisting}
# Input: p_infect[1..N]
# Parameters: VE

# Same seed s for baseline and vaccine run
function simulate_infections(seed s, vaccinated):
  # Initialize a stateful PRNG with seed s
  rng = PRNG(seed = s)

  # Track cumulative cases in this run
  cases = 0

  for i in 1..N:
      # By default, use person i's base probability of infection
      p = p_infect[i]

      if (i == 1) and vaccinated:
          # Intervention: vaccinate one person, reducing their
          # susceptibility (scales infection risk by 1 - VE)
          p = p * (1 - VE)

      # Bernoulli draw for whether person i becomes infected
      infected = (rng.rand() < p)

      if infected:
          # If infected, increment case count
          cases += 1

          # This draw ONLY happens if infected, consuming one PRNG call
          # In vaccine scenario, this may not execute, shifting all
          # subsequent draws by one position
          onset_day[i] = draw_incubation(rng)

  return cases
\end{lstlisting}
\caption{Stateful PRNG implementation. The conditional draw for incubation time
advances the PRNG state, shifting the draw index used for downstream events.}
\label{lst:stateful-prng}
\end{codelisting}

Perhaps surprisingly, this \emph{execution-path-dependent draw indexing}
implies a fundamentally different causal structure than the scientific model
intended. To represent what the code actually computes as a SCM, we must
introduce a new endogenous variable $K_2$ that tracks which random draw index
is used for person 2's infection event. This variable has no scientific meaning
--- it is purely an implementation artifact. Moreover, it creates a spurious
causal pathway from person 1's infection to person 2's infection that violates
the intended model structure, as shown in Figure~\ref{fig:option2-scm}.

\begin{figure}[H]
\centering
\begin{tikzpicture}[
  font=\scriptsize,
  var/.style={draw, rounded corners, inner sep=3pt, align=center, text width=2.6cm},
  exo/.style={var, fill=gray!10},
  endo/.style={var, fill=blue!7},
  det/.style={var, fill=green!7},
  arr/.style={-{Stealth[length=1.6mm]}, line width=0.5pt, shorten >=2pt, shorten <=2pt},
]

% Title anchor so fit includes title space
\node[draw=none, inner sep=0pt, outer sep=0pt] (titleanchor) at (0,3.4) {};

% Top row: seed -> draw stream
\node[exo] (S0) at (0,2.8) {$S_0$\\seed};
\node[det] (R1) at (3.6,2.8) {$R_1$\\draw \#1};
\node[det] (R2) at (7.2,2.8) {$R_2$\\draw \#2};
\node[det] (R3) at (10.8,2.8) {$R_3$\\draw \#3};

\draw[arr] (S0) -- (R1);
\draw[arr] (R1) -- (R2);
\draw[arr] (R2) -- (R3);

% Intervention and risk for person 1
\node[exo] (A)  at (0,1.0) {$A$\\vaccinate $i=1$};
\node[det] (p1) at (0,-0.6) {$p_1(A)$\\{\tiny $\texttt{p\_infect[1]}\cdot(1-A\cdot\mathrm{VE})$}};
\draw[arr] (A) -- (p1);

% Person 1 infection uses draw #1
\node[endo] (I1) at (3.6,0.8)
  {$I_1$\\infected?\\{\tiny $\mathbf{1}[R_1<p_1(A)]$}};
\draw[arr] (p1) -- (I1);
\draw[arr] (R1) -- (I1);

% Onset day draw - only if I1=1, consumes R2 (green = bookkeeping/scheduling)
\node[det, text width=6cm] (onset) at (6,-1.4)
{\texttt{onset\_day[1]}{\tiny $:= \begin{cases} \text{draw\_incubation}(R_2) & \mathrm{if} \; I_1{=}1 \\ \bot\text{ (undefined)} & \mathrm{if} \; I_1{=}0 \end{cases}$}};
\draw[arr] (I1) -- (onset.100);
\draw[arr] (R2) -- ++(0,-0.8) -- ++(-1.9,0) -- ++(0,-1.8) -| (onset.north);

% K2 = draw index for person 2
\node[det, text width=2.8cm] (K2) at (7.2,1.1)
  {$K_2$\\draw index for $i=2$\\{\tiny $K_2 = 2 + I_1$}};
\draw[arr] (I1) -- (K2);

% Person 2 infection uses R_{K2}
\node[endo, text width=2.8cm] (I2) at (10.8,0.6)
  {$I_2$\\infected?\\{\tiny $\mathbf{1}[R_{K_2}<p_2]$}};
\node[det] (p2) at (10.8,-1.2) {$p_2$\\{\tiny p\_infect[2]}};

\draw[arr] (p2) -- (I2);
\draw[arr] (K2) -- (I2);
\draw[arr] (R2) to[out=-10,in=150] (I2);
\draw[arr] (R3) -- (I2);

% Panel box + title
\node[
  draw, rounded corners, inner sep=8pt,
  fit=(titleanchor)(S0)(R3)(A)(p1)(I2)(onset),
  label={[anchor=south west, font=\small]north west:{\textbf{Program-level SCM:} draw-index $K_2$ made explicit}}
] (box) {};

\end{tikzpicture}
\caption{Actual program-level causal structure induced by stateful PRNG. The
draw index $K_2$ for person 2 becomes endogenous, depending on whether person 1
was infected ($K_2 = 2 + I_1$). If person 1 gets infected, the
\texttt{draw\_incubation} call consumes $R_2$, forcing person 2 to use $R_3$;
otherwise person 2 uses $R_2$. This creates a spurious causal pathway ($I_1 \to
K_2 \to I_2$) absent from the scientific model, violating execution invariance.
\textbf{Colors:} \colorbox{gray!10}{gray} = exogenous inputs/interventions;
\colorbox{green!7}{green} = deterministic/bookkeeping/scheduling;
\colorbox{blue!7}{blue} = model outcomes (infections).}
\label{fig:option2-scm}
\end{figure}
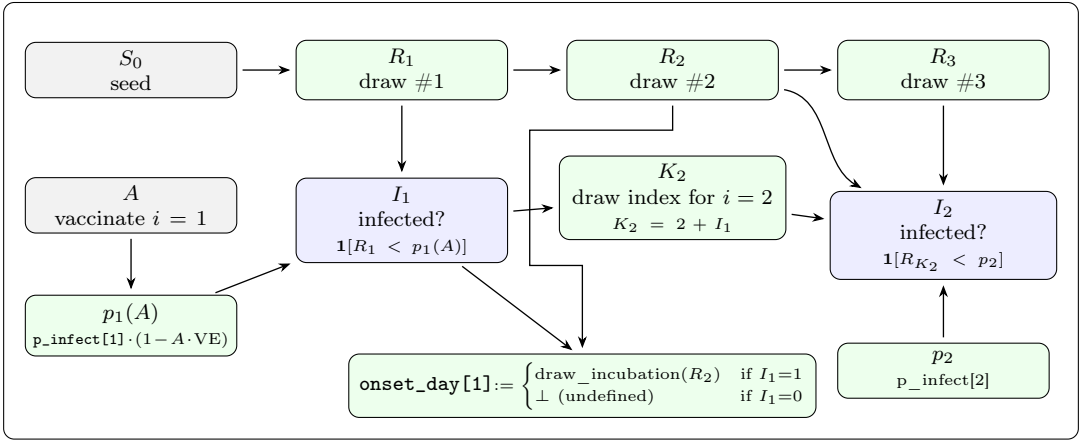

Formally, the intended scientific SCM specifies infection events as:

\begin{equation}
I_1 = \mathbf{1}[U_1 < p_1(A)], \qquad I_2 = \mathbf{1}[U_2 < p_2],
\label{eq:scientific-scm}
\end{equation}
where $U_1, U_2$ remain fixed across scenarios; only the structural equation for
$p_1(A)$ changes under intervention. By contrast, the program-level SCM induced by stateful
PRNG instead implements:

\begin{equation}
I_1 = \mathbf{1}[R_1 < p_1(A)], \qquad K_2 = 2 + I_1, \qquad 
I_2 = \mathbf{1}[R_{K_2} < p_2].
\label{eq:program-scm}
\end{equation}
The draw index $K_2$ is now endogenous, making the noise identity for person
2's infection depend on person 1's outcome. This violates execution invariance:
the same modeled event (person 2's infection) receives different exogenous
noise depending on draw index (e.g. PRNG consumption history) rather than event
identity alone.

For this toy example, one might readily propose draw schemes that avoid the
problem. However, those schemes rapidly increase in complexity with the model,
and execution invariance becomes a matter requiring both non-trivial proof that
an approach provides that guarantee and careful implementation.

\subsection{Stateful PRNGs Violate Execution Invariance}

The toy infection model above illustrates how stateful PRNGs create
execution-dependent noise assignment that implies an entirely different causal
model than the scientific one intended. We now formalize this violation to
characterize when and why it occurs.

\begin{proposition}[Violation of Execution Invariance]
\label{prop:stateful-violation}

Consider an ABM implementation using stateful PRNG with seed-matched
initialization across intervention scenarios $a$ and $a'$. If an event $e$
occurs in both scenarios and the number of PRNG calls (within the relevant
random stream, if multiple streams are used) before $e$ differs between
scenarios, then:

\begin{enumerate}[nosep]
\item The draw index $K_e$ becomes endogenous (depends on prior outcomes)
\item Event $e$ receives different noise: $U_e^{(a)} \neq U_e^{(a')}$
\item The implementation violates execution invariance
\end{enumerate}
\end{proposition}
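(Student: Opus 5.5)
The plan is to model the stateful PRNG run as a deterministic map from the seed to a stream, and to make the draw index explicit as an endogenous variable, exactly as was done for $K_2$ in the toy example of Figure~\ref{fig:option2-scm}. Fix the seed $s$. The stateful generator $(u_k, s_{k+1}) = F(s_k)$ with $s_0 = s$ determines a fixed sequence $R_1, R_2, \ldots$ that is identical across scenarios. The noise used by event $e$ in scenario $a$ is then $U_e^{(a)} = R_{K_e^{(a)}}$, where
\[
K_e^{(a)} = 1 + N_e^{(a)},
\]
and $N_e^{(a)}$ is the number of PRNG calls (in the relevant stream) executed before $e$. If $e$ consumes several uniforms, $K_e$ is the index of the first one, and the argument is unchanged.

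For item (1), observe that $N_e^{(a)}$ is a sum over executed draw sites of indicators of whether each site runs. By the hypothesis $N_e^{(a)} \neq N_e^{(a')}$, these indicators differ between the two scenarios. Since the program is deterministic given $s$ and $a$, and the seed and stream are shared, any difference in executed draws must arise from control-flow branches. Those branches are functions of the intervention and of prior endogenous outcomes, which are themselves functions of earlier $R_k$. Hence $K_e$ is a nonconstant function of upstream endogenous variables rather than a label fixed by the identity of $e$, so it must appear as an endogenous node in the program-level SCM. This generalizes $K_2 = 2 + I_1$. Concretely, I would write $K_e = 1 + \sum_j \mathbf{1}[\text{site } j \text{ executes before } e]$ and note that each indicator is a structural function of $(A, R_1, \ldots)$.

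For item (2), the step is $U_e^{(a)} = R_{K_e^{(a)}}$ versus $U_e^{(a')} = R_{K_e^{(a')}}$ with $K_e^{(a)} \neq K_e^{(a')}$. This is the main obstacle: distinct indices need not give distinct values, since a PRNG output can repeat. I would therefore prove the claim in the sense the paper intends, namely that $e$ is assigned a \emph{different noise variable} (a different exogenous coordinate). I would then add that the numerical values differ except on a null event. Under the idealization that the stream consists of i.i.d.\ $\mathrm{Unif}(0,1)$ draws, $P(R_k = R_{k'}) = 0$ for $k \neq k'$. For finite-precision generators, the collision probability is about $2^{-w}$ for word size $w$, and for full-period generators it is exactly zero within one period. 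So I would state item (2) as holding as variables, and in value almost surely.

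Item (3) then follows from the definition of execution invariance. Suppose there were a $g$ with $U_e = g(s, \text{event\_id}_e)$ and a scenario-stable identifier. Then $U_e^{(a)} = U_e^{(a')}$, because $s$ and $\text{event\_id}_e$ agree across the two runs. This contradicts item (2). The same argument rules out any scenario-free index map $k(\text{event\_id}_e)$ into the stream, because the realized index depends on the execution history through $N_e$.
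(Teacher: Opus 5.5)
Your proposal is correct and follows the paper's own argument, made more explicit: the paper's sketch likewise assigns $e$ the draw $R_{n^{(a)}+1}$ versus $R_{n^{(a')}+1}$ and concludes that differing stream positions contradict $U_e = g(s,\text{event\_id}_e)$. Your distinction between ``different noise variable'' and ``different value, almost surely'' fills in a step the paper glosses over; to make the almost-sure claim rigorous, take a countable union over pairs $k\neq k'$, since the indices $K_e^{(a)}$ themselves depend on the draws.

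One side remark is wrong, though nothing depends on it. A full period excludes repeated outputs within a period only when the output map is injective on the state. That fails for Mersenne Twister, PCG and xoshiro, whose outputs are shorter than their states, so rely on the i.i.d.\ idealization or your $2^{-w}$ heuristic instead.
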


\begin{proof}[Proof sketch]
Let $n^{(a)}$ and $n^{(a')}$ denote the number of PRNG calls before 
event $e$ in scenarios $a$ and $a'$ respectively. With stateful PRNG, 
event $e$ receives draw $R_{n^{(a)}+1}$ in scenario $a$ and draw 
$R_{n^{(a')}+1}$ in scenario $a'$. If $n^{(a)} \neq n^{(a')}$ (which 
occurs when interventions change conditional execution), then event $e$ 
receives different stream positions, violating $U_e = g(s, \text{event\_id}_e)$. 
\end{proof}

\begin{corollary}
Seed-matched CRNs using stateful PRNGs fail to produce valid
counterfactual couplings when interventions alter execution paths such that
draw index decouples from event identity.
\end{corollary}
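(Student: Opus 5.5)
The plan is to derive the corollary directly from Proposition~\ref{prop:stateful-violation}, by tying the failure of execution invariance to the shared-world requirement in \eqref{eq:shared-world}. I will first make precise what a \emph{valid counterfactual coupling} means in Section~\ref{sec:abm-as-scm}. Seed-matched runs $(Y^{(a)}(s), Y^{(a')}(s))$ form such a coupling only if both can be written as $(\Phi_a(\mathbf{U}), \Phi_{a'}(\mathbf{U}))$ with a single world $\mathbf{U}=G(s)$. Each event $e$ that occurs in both scenarios must also read the same component $U_e$ of that world. Equivalently, the implementation must be execution-invariant in the sense of the Definition: $U_e = g(s,\text{event\_id}_e)$.

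The key steps, in order, are as follows.
\begin{enumerate}
\item Fix a seed $s$ and scenarios $a,a'$. Suppose the intervention alters the execution path, so that for some event $e$ occurring in both runs, the number of prior PRNG calls in $e$'s stream differs: $n^{(a)} \neq n^{(a')}$. This is the formal reading of ``draw index decouples from event identity.''
\item Invoke Proposition~\ref{prop:stateful-violation}. Then $K_e$ is endogenous, and $e$ reads $R_{n^{(a)}+1}$ in one run and $R_{n^{(a')}+1}$ in the other.
\item Conclude that no single event-indexed $U_e$ feeds $e$ in both runs. So the paired runs are not of the form $(\Phi_a(\mathbf{U}),\Phi_{a'}(\mathbf{U}))$ with event-indexed $\mathbf{U}$. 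Instead they have separate ``exogenous tapes'' per scenario.
\item Conclude that the event-level counterfactual contrast, e.g.\ $\mathrm{ITE}_e$ and the principal-stratum assignment of Table~\ref{tab:principal-strata}, compares different exogenous realizations and is therefore not a valid counterfactual coupling.
\end{enumerate}

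The main obstacle is step 3, specifically the claim that the noise actually \emph{differs} rather than merely coming from a different index. The draws $R_{n^{(a)}+1}$ and $R_{n^{(a')}+1}$ could coincide numerically, or coincide for some seeds. I would resolve this by interpreting ``different noise'' as different \emph{identity}: $e$ is mapped to a different exogenous coordinate of $G(s)$. Under the idealized model of the stream as i.i.d.\ uniforms, distinct coordinates are independent, so the values differ almost surely. That independence is what destroys the coupling for $e$: the conditional correlation between $Y^{(a)}_e$ and $Y^{(a')}_e$ given $K_e$ values is lost. For the claim about validity itself, identity alone suffices. A coupling in which the same event can be fed an index that depends on endogenous history is, by the Definition, not execution-invariant, whatever the realized values.

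Finally, I will note that the failure persists when multiple streams are used, as long as the count mismatch occurs within $e$'s stream. This matches the parenthetical in Proposition~\ref{prop:stateful-violation}.
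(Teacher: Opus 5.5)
Your proposal is correct and follows the paper's route. The paper gives no separate proof of the corollary; it treats it as an immediate consequence of Proposition~\ref{prop:stateful-violation}, whose sketch argues exactly that differing prior-call counts place event $e$ at different stream positions $R_{n^{(a)}+1}$ versus $R_{n^{(a')}+1}$, so $U_e = g(s,\text{event\_id}_e)$ fails and the event-indexed shared-world coupling fails with it. Your extra care in separating noise \emph{identity} from numerical value, and in noting that the bare form $Y^{(a)}=\Phi_a(G(s))$ is not enough without event-indexing, refines the paper's informal ``different stream positions'' wording rather than changing the argument.
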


These violations have consequences for statistical efficiency (variance
reduction becomes unpredictable or even negative), counterfactual coherence
(individual-level treatment effects become ill-defined), and auxiliary
analyses such as sensitivity analysis and mediation analysis. We detail these
consequences in \Cref{app:consequences}.

\section{Event-Keyed Random Number Generators}
\label{sec:event-keyed}

In Section~\ref{sec:execution-invariance} we defined execution invariance as a
stable mapping from event identity to exogenous noise across intervention
scenarios and random seeds: $U_e = g(s, \text{event\_id}_e)$. Note that this
requires the modeler to define what are \emph{matched events} across worlds,
which is a critical part of defining the scientific SCM; event matching, while
conceptually simple, can be challenging in practice as we illustrate in
Section~\ref{ssec:event-key-identifiers}.

Stateful PRNGs typically violate the execution invariance property by making
noise assignment depend on draw index rather than event identity (i.e. events
are not matched). We now present a conceptual approach that restores execution
invariance by making random draws explicit functions of stable event
identifiers.

\subsection{Counter-Based Random Number Generators}

Unlike stateful PRNGs that maintain a mutable internal state,
\emph{counter-based PRNGs} (or sometimes known as \emph{hash-based PRNGs}) are
purely functional algorithms \citep{Hughes1989-ps} whose output only depends on
the input arguments; these functions have no mutable internal state or side
effects. Counter-based PRNGs can be thought of as functions of the form

$$ R = g(\text{key}, \text{counter}) $$
where $g$ is a block-cipher-like mixing function, \texttt{key} is a fixed seed
(or stream identifier), and \texttt{counter} indexes which variate is returned.
For event-keyed usage, the seed serves as the key (selecting a ``world''), and
the counter is replaced by a composite event identifier that encodes both
the event identity and, when an event requires multiple draws, a within-event
draw index. That is, we call $g(\text{seed}, \text{event\_id}_e)$ where
$\text{event\_id}_e$ uniquely identifies each draw. When an event requires
multiple independent variates (e.g.\ both a Bernoulli trial and a duration),
each draw uses a distinct event identifier (e.g.\ by appending an index or
using a separate event type label).

Because counter-based PRNGs are purely functional, different inputs produce
values that have the properties of independent random draws, and
repeated calls with the same inputs always return the same output
(i.e. enabling random access); naturally, this enables execution invariance by
making random draws explicit functions of event identity. Historically,
counter-based PRNGs were designed to enable random number generation in
parallel computing contexts \citep{Salmon2011-zz}. For our purposes, they
decouple random draws from execution order by making each draw an explicit
function of a stable identifier rather than an implicit function of execution
history.

Listing~\ref{lst:event-keyed} shows a concrete implementation of event-keyed
randomness applied to toy infection example.

\begin{codelisting}[H]
\centering
\begin{lstlisting}
# Input: p_infect[1..N]
# Parameters: VE

# Same seed s for baseline and vaccine run
function simulate_infections_event_keyed(seed s, vaccinated):
    # No mutable PRNG state
    cases = 0
    for i in 1..N:
        p = p_infect[i]

        if (i == 1) and vaccinated:
            p = p * (1 - VE)

        # Event-specific key ensures same draw for same event
        event_key = hash("infection", i)
        infected = (CBRNG(seed, event_key) < p)

        if infected:
            cases += 1
            # Uses different key - doesn't shift other events
            inc_key = hash("incubation", i)
            onset_day[i] = draw_incubation(CBRNG(seed, inc_key))

    return cases
\end{lstlisting}
\caption{Event-keyed implementation. Each event uses a unique identifier combining
event type and agent ID, passed alongside the seed to a counter-based PRNG. Conditional execution (\texttt{onset\_day} draw)
no longer shifts downstream random numbers. Person 2's infection always uses
the same key regardless of person 1's outcome.}
\label{lst:event-keyed}
\end{codelisting}

\subsection{Model Building and Event-Key Identifiers}
\label{ssec:event-key-identifiers}
While counter-based PRNGs provide the building blocks for execution invariance,
they require the modeler to define \emph{event identifiers} that remain stable
across intervention scenarios. Defining a stable event-key identity across
counterfactual worlds is therefore equivalent to defining what constitutes
``the same event'' across counterfactual worlds. We emphasize that this is a
substantive \emph{modeling choice} and \emph{modeling assumption}, and not
automatable (cf.\ the debate over ``transworld identity'' and Lewis's
\citeyear{Lewis1986-ip} counterpart-theoretic alternative; see
\citealt{sep-identity-transworld} and references therein).

Scientific and statistical models routinely distinguish what is represented
explicitly in the model, from what is unexplained mechanism or process
stochasticity treated as uncertainty in the model (in a regression context one
might loosely call this ``signal'' versus ``noise''). Importantly, SCMs make a
related boundary explicit: what is included in the modeled state and structural
mechanisms versus what enters as exogenous ``chance''. Event-keyed randomness
adds a further modeling choice about the \emph{structure} of that exogenous
chance \emph{across counterfactual worlds}: it specifies which exogenous random
variation is held fixed across scenarios for events we regard as identical.

Recall from Section~\ref{sec:abm-as-scm} that transmission events take the form
$Y_e = \mathbf{1}[U_e < p_e(S_e)]$, where $S_e$ captures the full modeled state
and inputs used to determine the risk of transmission event $e$, and $U_e \sim
\mathrm{Unif}(0,1)$ is the exogenous noise. Under SCM semantics, interventions
change the modeled mechanisms that determine $p_e(S_e)$ (through changes in
$S_e$ or the functional form of $p_e$), while holding the exogenous noise
collection $\mathbf{U}$ fixed. An event-key scheme therefore determines which
particular exogenous variable $U_e$ is associated with a given modeled
transmission occurrence, and thus what it means to hold ``all else equal'' when
comparing across counterfactual worlds. Given that execution invariance
requires $U_e = g(s, \text{event\_id}_e)$, the event-key design is precisely
what defines which events are considered the same across counterfactual worlds.

We motivate how different event-key designs correspond to different modeling
choices about the structure of exogenous randomness across counterfactual
worlds by adding some realism and detail to our prior transmission model.
Consider modeling infection events between healthcare workers and patients in a
healthcare clinic. A susceptible patient $i$ has a scheduled close-contact
interaction at time $t$ (e.g. triage or examination). Suppose in the baseline
scenario, the healthcare worker present for the encounter is individual $j$,
who is infectious; under another scenario (e.g. because $j$ is absent due to
earlier infections or policy), the worker present at the same time $t$ is
instead a \emph{different} healthcare worker $k$.

In both scenarios, suppose these encounters are identical with respect to the
determinants of transmission risk that enter $p_e(\cdot)$ (e.g.
role, duration, PPE, infectiousness, etc.), so that the modeled risk is exactly
the same, even though the partner identity differs. Then, the event-key design
determines the counterfactual semantics of the infection event: should the
residual exogenous chance variable $U_e$ be attached to the \emph{contact
opportunity} at time $t$ (and thus shared across scenarios), or attached to the
\emph{dyad} (patient--worker pair) (and thus change when the worker changes)?

One coherent choice is to treat the event as the \emph{contact opportunity}
itself, with contact-partner identity (and any partner-specific covariates)
entering only through the modeled state $S_e$. We call this \emph{slot-keyed}
(we note that this is distinct from the ``slotting'' terminology used in
\citealt{klein2024noise}) event identity: the slot-based event-key encodes that
$U_e = U_{i,r}$ is the residual chance associated with patient $i$'s $r$th
scheduled encounter on day $t$,
\begin{equation}
    \text{event\_id}_e = (t,i,r), \qquad U_{i,r} = g\!\big(s,(t,i,r)\big).
\end{equation}
Under slot-keying, the transmission potential outcomes \emph{share the same
exogenous noise} as long as the same contact opportunity occurs at time $t$,
regardless of whether the contact partner is $j$ or $k$:
\begin{equation}
    Y_e^{(0)} = \mathbf{1}[U_{i,r} < p_e(S^{(0)}_e)], \qquad Y^{(1)}_e = \mathbf{1}[U_{i,r} < p_e(S^{(1)}_e)].
\end{equation}
The identity of the healthcare worker affects transmission only through
$p_e(S_e)$, not through the exogenous noise. The counterfactual question
answered is: ``does patient $i$ become infected at their $r$th encounter,
given the modeled determinants of that encounter?''

A different coherent choice is to treat the event as inherently \emph{dyadic},
so that the residual chance is attached to a specific patient--worker
interaction. Under \emph{dyad-keyed} event identity, the change in partner
implies different exogenous noise:
\begin{equation}
    \text{event\_id}_e = (t,i,j), \qquad U_{i,j} = g\!\big(s,(t,i,j)\big).
\end{equation}
Consequently, changing the worker from $j$ to $k$ means the potential
outcomes of the transmission event across intervention scenarios use 
\emph{different exogenous noise}
\begin{equation}
    Y^{(0)}_e = \mathbf{1}[U_{i,j} < p_e(S^{(0)}_e)], \qquad Y^{(1)}_e = \mathbf{1}[U_{i,k} < p_e(S^{(1)}_e)]
\end{equation}
where $U_{i,j} \neq U_{i,k}$ are distinct independent draws. This corresponds
to a different counterfactual semantics: ``holding fixed the latent chance
associated with exposure to a \emph{particular individual},'' rather than
holding fixed the latent chance associated with a \emph{particular contact
opportunity} slot. This counterfactual question at the event- or
individual-level is only well-posed for comparisons involving the same worker;
when the healthcare worker that patient $i$ interacts with changes from $j$ to $k$, no valid
counterfactual exists, and thus no individual- or event-level
treatment effect can be estimated. Only when we average over many such events
(e.g. to estimate ATEs) can we obtain well-defined
contrasts.

Fundamentally, the choice between slot-keying and dyad-keying reflects an
exchangeability assumption. Slot-keying assumes that potential outcomes under
different contact partners share the same exogenous noise: the residual
variation in transmission risk is a property of the contact \emph{opportunity},
not the specific partner. This is appropriate when partners are exchangeable
conditional on $S_e$; that is, when all relevant partner-specific heterogeneity
is captured in deterministically modeled state. Dyad-keying relaxes this
assumption, allowing for partner-specific variation not captured in $S_e$;
implying that valid individual-level counterfactuals require the same partner
in both scenarios. This parallels exchangeability assumptions in causal
inference \citep{Greenland1986-pi}: just as conditional exchangeability
requires that treatment groups be comparable after adjusting for covariates,
slot-keying assumes contact partners are interchangeable after conditioning on
modeled state $S_e$.

This illustrates why event identity (and thus event-key design) is not
automatable: it takes a modeling decision to explicitly specify what remains
exogenous across worlds, i.e., which aspects of a modeled stochastic occurrence
are treated as ``the same underlying randomness'' after accounting for modeled
mechanisms. Stateful PRNGs implicitly set event identity to draw index, thereby
implicitly (and often unintentionally) choosing a transworld identity map that
declares events identical only until a draw-index shift occurs. By contrast,
event-keyed randomness makes that choice explicit and therefore an inspectable
(and scrutable) model choice. 

\subsection{Practical Event-Key Design Guidelines}

Three practical principles guide event-key design. First, event identifiers
must be sufficiently \emph{granular} such that distinct modeled events receive
distinct exogenous draws; otherwise, the modeler introduces spurious
dependencies between events that should be conditionally independent given the
modeled state. For example, omitting the time index and using only agent ID for
daily infection draws actually means there is only one event, corresponding to
e.g.\ setting the agent's lifetime susceptibility. This choice introduces
dependence between what scientifically should perhaps be independent trials.

Two distinct concerns arise regarding key uniqueness. First, \emph{semantic
reuse}: querying the same event key multiple times within a scenario should be
avoided; instead, sample each event once and cache the result, which clarifies
the scientific model (each event occurs exactly once). Second, \emph{accidental
collision}: even for theoretically distinct events, a hash-based approach could
yield identical keys. With sufficiently large identifier spaces (e.g.\
128-bit), collision probability is negligible; duplicates can be asserted
against in debug builds.

Second, event identifiers should be \emph{isolated} from spurious inputs whose
values change across scenarios due to the intervention. Doing so makes the
identity of a transmission event's $U_e$ a descendant of upstream events,
reintroducing execution-history dependence in a new form. For example,
imagine including a summary of an endogenous model quantity (e.g. incidence on
day $t$) in the event key for an individual's infection event on day $t$. This
would cause that individual's residual noise to change across scenarios
whenever incidence differs. If history (e.g. recent population incidence) affects
risk (e.g. via behavior change), that dependence should enter through explicitly modeled state
$S_e$ and hence through $p_e(S_e)$, rather than implicitly through the
exogenous noise identity.

Third, a central problem specific to agent- or individual-based simulations is
that agent identifiers are often global, mutable stateful counters (not
dissimilar from the stateful PRNGs). Any stochastic birth-order changes (e.g. a
parent dies, so the child they produce in a vaccine intervention scenario is
never born in the baseline scenario) can cause the unique agent identifiers to
become mismatched across counterfactual worlds. Since agent identifiers are
used to key exogenous random noise, these must remain stable. Fortunately, the
solution mirrors event keying: the population indexes can be formed from
identifying information. The ``founder set'' might have indices $1, \ldots,
N_0$ across scenarios (perhaps differentiated by seed), and subsequent
offspring receive identifiers derived from their parent's identifier and birth
order, $\text{id}_{\text{child}} = h(\text{id}_{\text{parent}}, k)$ where $k$
indexes the $k$th offspring of that parent. This ensures that genealogically
corresponding agents share identifiers across scenarios, though alternative
keying strategies (e.g.\ incorporating birth timing) may be more appropriate
depending on the scientific model.

Finally, event-keyed randomness cleanly handles scenario-specific,
conditionally occurring events. Each conditionally-occurring event has a
well-defined key regardless of whether it occurs in a particular scenario. If
person 1 is not infected, the incubation key \texttt{hash(seed, "incubation",
1)} is simply never queried. Regardless, its existence does not shift any other
event's random draw as it would under a stateful PRNG. This separation between
\emph{defining} an event's noise identity and \emph{using} it mirrors the SCM
semantics where $\mathbf{U}$ contains noise for all possible events, only some
of which are realized in any given scenario.

% \section{Event Key Perturbation and Causality}
%
% With event-keyed randomness and deterministic state updates, changing the
% random choice at a single addressed event while holding all other addressed
% choices fixed produces differences that are entirely mediated by that event
% through the simulator’s own dependency structure; in particular, no unrelated
% downstream differences arise from call-order shifts.
%
% This is the simulator analogue of a surgical intervention on one noise variable
% in a structural causal model: all other exogenous noises are held fixed, so
% differences are attributable to the intervened noise via deterministic
% propagation.

\section{Discussion}
\label{sec:discussion}

We have argued that stateful PRNGs create a fundamental mismatch between the
scientific causal structure ABMs are intended to encode, and the program-level
causal structure they actually implement. This divergence leads to practical
limitations (e.g. variance reduction using CRNs may not be efficient), as well
as more serious scientific problems (e.g. individual-level counterfactual
estimates may be ill-posed). We frame these well-known issues with stateful
PRNGs in a structural causal model framework, which formalizes how these
algorithms lead to invalid counterfactuals. We argue that event-keyed
randomness, combining counter-based PRNGs with stable event identifiers,
restores the execution invariance required for SCM-valid counterfactual
comparisons.

In arguing for counter-based PRNGs, we have shown that they bring a new
requirement from modelers: since event-keys couple the exogenous noise for
events across counterfactual worlds, they require intentional design. This is a
fundamental modeling choice that is unavoidable, and not automatable since
event-keys correspond to different underlying counterfactual semantics that are
the responsibility of the researcher to fully specify. This choice also implies
a spectrum of event design corresponding with different model assumption
strengths. A simple event key implies a strong assumption of exchangeability
between events, while a sufficiently complex event key can approach independent-seed
comparisons with no shared events between scenarios.

On the practical side, modern counter-based PRNGs are comparable in speed to
stateful generators, while enabling the event-keyed access semantics our work
requires. In particular, Philox is about 2x slower and Threefry is about 5\%
faster than the Mersenne Twister \citep[Table 2]{Salmon2011-zz}, though these
comparisons may be degraded when accounting for complex event key calculation.
We note, however, that unlike stateful PRNGs, counter-based PRNGs are naturally
parallelizable, meaning that in multicore settings counter-based PRNGs may
unlock faster performance than stateful PRNGs.

More broadly, event-keyed randomness connects ABM methodology to functional
programming principles: making random draws pure functions of their inputs
rather than side effects of mutable state. This same property that enables
causal validity also facilitates parallelization, debugging, and
reproducibility more generally. We hope this work encourages the ABM community
to treat execution invariance not as an optimization detail but as a core
requirement for causally coherent simulation-based inference.

\begin{appendices}

\section{Consequences of Execution Invariance Violations}
\label{app:consequences}

Execution invariance violations undermine the use of ABMs to answer causal
questions in three ways. First, they make variance reduction from common random
numbers via fixed initial seed unpredictable, an issue pointed out by
\citep{klein2024noise}. Second, and more fundamentally, they render
individual-level counterfactuals ill-defined. Third, they compromise other
downstream analyses (e.g. sensitivity analyses, variance decomposition,
analyzing individual trajectories) that rely on stable noise assignments. We
discuss each of these issues in turn.

\paragraph{Statistical Efficiency.} The most immediate practical consequence of
execution invariance violations is that the variance reduction benefits of
using CRNs to estimate ATEs become unpredictable. When
interventions alter draw indices, the covariance $\Cov(Y^{(1)}, Y^{(0)})$
may even turn negative as noise terms stochastically misalign, in which case
seed-matching \emph{increases} variance rather than reducing it. At best,
misalignment produces weak or zero covariance, eliminating variance reduction
benefits.

\paragraph{Counterfactual Coherence.}
More fundamentally, execution invariance violations render individual-level
treatment effects ill-defined. In Pearl's SCM framework, a counterfactual
comparison asks: ``What would happen to this unit under a different
intervention, holding all else equal?'' --- where ``all else'' includes the
exogenous noise terms $\mathbf{U}$.

As we showed in Figure~\ref{fig:option2-scm}, stateful PRNGs introduce implicit
endogenous variables (the draw indices $K_e$) that make noise term identity
depend on execution history rather than event identity alone. This means
exogenous noise inputs are no longer held constant across scenarios. When we
ask ``what would person 2's infection status be if we vaccinated person 1?'',
we are asking about the same person 2 experiencing different causal conditions.
This requires the same exogenous noise $U_2$ in both scenarios. With stateful
PRNGs and seed-matching, person 2 receives $R_2$ in one scenario and $R_3$ in
another --- these represent different random events, not the same person under
different treatments.

Consequently, the individual treatment effects $\mathrm{ITE}_i = Y_i^{(1)} -
Y_i^{(0)}$ which we use to estimate the ATE become comparisons between
\emph{different chance events} rather than causal effects \emph{on the same
unit}. This violates the fundamental coupling between baseline and intervention
``worlds'' assumption underlying counterfactual reasoning. This is especially
troubling for ABMs, which are unique in their ability to simulate both
potential outcomes to overcome the fundamental problem of causal inference.

\paragraph{Auxiliary Analyses.} Often, researchers conduct exploratory or
post-model-fitting analyses to better understand the behavior of their
(sometimes rather complex) simulation model. For example, in a
\emph{sensitivity analysis}, a researcher perturbs a single parameter to see
its effects on outcomes (e.g. varying vaccine efficacy $VE$ in our toy model to
see how sensitive incidence is to this parameter). However, a consequence of
violations of execution invariance is that such parameter perturbations may
trigger a new random draw, which then breaks the downstream matching between
exogenous noise terms and modeled events. Under the intended scientific SCM,
minor parameter changes should not affect unrelated events' exogenous noise,
but stateful PRNGs violate this through draw-index coupling.

Another auxiliary analysis researchers often do in simulation-based work is
\emph{variance decompositions}, such as Sobol indices \citep{Sobol-2001-gt}.
These approaches partition outcome variance into contributions from different
inputs. Computing first-order Sobol indices $S_i = V_i / \Var(Y)$ requires
estimating $V_i = \Var_{X_i}[\E_{X_{\sim i}}(Y \mid X_i)]$ by fixing parameter
$X_i$ at various values while varying all other parameters $X_{\sim i}$ using
different random realizations. Execution invariance violations contaminate this
estimation: when changing $X_i$ alters execution paths, the random draws used
for $X_{\sim i}$ stochastically misalign across different fixed values of
$X_i$, introducing spurious variance that does not reflect the model's actual
sensitivity structure.

Additionally, \emph{mediation analysis} can be adversely affected by execution
invariance violations. Mediation analysis is a causal inference technique that
decomposes treatment effects into direct and indirect pathways by intervening
on intermediate variables \citep{Pearl2001-xf,VanderWeele2016-xy}. For example,
in transmission models vaccine effects can be decomposed into direct protection
of vaccinated individuals versus indirect protection of others through reduced
transmission. ABMs are well-suited to estimate such effects via simulation
interventions. Such decompositions require intervening on intermediate
variables (e.g., vaccine infection status) while holding all exogenous noise
fixed, which is precisely the coupling that execution invariance provides. The
spurious causal paths introduced by endogenous random draw indices
(Figure~\ref{fig:option2-scm}) create artificial dependencies between
mechanistically unrelated events, contaminating mediation estimates.

\end{appendices}

\section*{Acknowledgements}
We thank Kyra Grantz and Christine Wen for helpful feedback and discussions.

\bibliography{references}

\end{document}